\newtheorem{proposition}{Proposition}
\newtheorem{theorem}{Theorem}
\newcommand{\E}{\ensuremath{\mathbb{E}}}
\newcommand{\fake}{\ensuremath{{f}}}
\newcommand{\nofake}{\ensuremath{{\bar{f}}}}
\newcommand{\UserPrior}{\ensuremath{\Theta}}
\newcommand{\userPrior}{\ensuremath{\theta}}
\newcommand{\newsPrior}{\ensuremath{\omega}}
\newcommand{\userHistory}{\ensuremath{\mathcal{D}}}
\newcommand{\val}{\ensuremath{\textnormal{val}}}
\newcommand{\oracle}{\textsc{Oracle}\xspace}
\newcommand{\algo}{\textsc{Detective}\xspace}
\newcommand{\algoStar}{\textsc{Opt}\xspace}
\newcommand{\opt}{\textsc{Opt}\xspace}
\newcommand{\random}{\textsc{Random}\xspace}
\newcommand{\nolearn}{\textsc{No-Learn}\xspace}
\newcommand{\fixedCM}{\textsc{Fixed-CM}\xspace}
\newcommand{\generalalgo}{\textsc{Algo}\xspace}
\newcommand{\topx}{\textsc{TopX}\xspace}
\newcommand{\Time}{\ensuremath{T}}
\begin{document}

\copyrightyear{2018}
\acmYear{2018} 
\setcopyright{iw3c2w3}
\acmConference[WWW '18 Companion]{The 2018 Web Conference Companion}{April 23--27, 2018}{Lyon, France}
\acmBooktitle{WWW '18 Companion: The 2018 Web Conference Companion, April 23--27, 2018, Lyon, France}
\acmPrice{}
\acmDOI{10.1145/3184558.3188722}
\acmISBN{978-1-4503-5640-4/18/04}

\fancyhead{}

% \title{Leveraging Crowd Signals for Fake News Detection}
\title{Fake News Detection in Social Networks via Crowd Signals}
% \title{Detecting Fake News in Social Networks via Crowdsourcing}
%\titlenote{Produces the permission block, and
%  copyright information}
%\subtitle{Extended Abstract}
%\subtitlenote{The full version of the author's guide is available as
%  \texttt{acmart.pdf} document}

%\author{Anonymous}
%%\authornote{Work performed while at ETH Zurich.}
%\affiliation{%
%  \institution{Affiliation}
%%  \city{Cambridge}
%%  \country{United Kingdom}
%}

\author{Sebastian Tschiatschek}
\authornote{Work performed while at ETH Zurich.}
\affiliation{%
  \institution{Microsoft Research}
  \city{Cambridge}
  \country{United Kingdom}
}
\email{setschia@microsoft.com}

\author{Adish Singla}
\affiliation{%
  \institution{MPI-SWS}
  \city{Saarbr{\"u}cken}
  \country{Germany}
}
\email{adishs@mpi-sws.org}

\author{Manuel Gomez Rodriguez}
\affiliation{%
  \institution{MPI-SWS}
  \city{Kaiserslautern}
  \country{Germany}}
\email{manuelgr@mpi-sws.org}

\author{Arpit Merchant}
\affiliation{%
  \institution{IIIT-H}
  \city{Hyderabad}
  \country{India}
}
\email{arpitdm@gmail.com}

\author{Andreas Krause}
\affiliation{%
 \institution{ETH Zurich}
 \city{Zurich}
 \country{Switzerland}}
\email{krausea@ethz.ch}

% The default list of authors is too long for headers.
\renewcommand{\shortauthors}{S. Tschiatschek et al.}

%!TEX root = main.tex

%%%%%%%%%%%%%%%%%%%%%%%%%%%%%%%%%%%%%%%%%%%%%%%%%%%%%%%%%%
%%%%%%%%%%%%%%%%%%%%%%%%%%%%%%%%%%%%%%%%%%%%%%%%%%%%%%%%% ABSTRACT
\begin{abstract}
Our work considers leveraging crowd signals for detecting fake news and is motivated by tools recently introduced by Facebook that enable users to flag fake news. By aggregating users' flags, our goal is to select a small subset of news every day, send them to an expert (e.g., via a third-party fact-checking organization), and stop the spread of news identified as fake by an expert. The main objective of our work is to minimize \emph{the spread of misinformation} by stopping the propagation of fake news in the network. It is especially challenging to achieve this objective as it requires detecting fake news with high-confidence as quickly as possible. We show that in order to leverage users' flags efficiently, it is crucial to learn about users' flagging accuracy. We develop a novel algorithm, \algo, that performs Bayesian inference for detecting fake news and jointly learns about users' flagging accuracy over time.
% based on history of users' flagging activity and expert's labels obtained.
Our algorithm employs posterior sampling to actively trade off exploitation (selecting news that maximize the objective value at a given epoch) and exploration (selecting news that maximize the value of information towards learning about users' flagging accuracy). We demonstrate the effectiveness of our approach via extensive experiments and show the power of leveraging community signals for fake news detection.
% towards leveraging crowd signals for detecting fake news.
%on a publicly available Facebook dataset.
\end{abstract}

\maketitle

%%%%%%%%%%%%%%%%%%%%%%%%%%%%%%%%%%%%%%%%%%%%%%%%%%%%%%%%%%%
%%%%%%%%%%%%%%%%%%%%%%%%%%%%%%%%%%%%%%%%%%%%%%%%%%%%%%%%%%% INTRODUCTION
%!TEX root = main.tex

\section{Introduction}
Fake news (a.k.a. hoaxes, rumors, etc.) and the spread of misinformation have dominated the news cycle since the US presidential election (2016). Social media sites and online social networks, for example Facebook and Twitter, have faced scrutiny for being unable to curb the spread of fake news. There are various motivations for generating and spreading fake news, for instance, making political gains, harming the reputation of businesses, as clickbait for increasing advertising revenue, and for seeking attention\footnote{Snopes compiles a list of top 50 fake news stories: \texttt{http://www.snopes.com/50-hottest-urban-legends/}}. As a concrete example, Starbucks recently fell victim to fake news with a hoax advertisement claiming that the coffee chain would give free coffee to undocumented immigrants\footnote{\url{http://uk.businessinsider.com/fake-news-starbucks-free-coffee-to-undocumented-immigrants-2017-8}}. While Starbucks raced to deny this claim by responding to individual users on social media, the lightening speed of the spread of this hoax news in online social media highlighted the seriousness of the problem and the critical need to develop new techniques to tackle this challenge. To this end, Facebook has recently announced a series of efforts towards tackling this challenge \cite{facebook-news-feed,facebook-germany}.

\noindent{\bfseries Detection via expert's verification.} 
Fake news and misinformation have historically been used as tools for making political or business gains \cite{ewen1998pr}. However, traditional approaches based on verification by human editors and expert journalists do not scale to the volume of news content that is generated in online social networks. In fact, it is this volume as well as the lightening speed of spread in these networks that makes this problem challenging and requires us to develop new computational techniques. We note that such computational techniques would typically complement, and not replace, the expert verification process---even if a news is detected as fake, some sort of expert verification is needed before one would actually block it.  This has given rise to a number of third-party fact-checking organizations such as Snopes\footnote{\url{http://www.snopes.com/}} and Factcheck.org\footnote{\url{http://factcheck.org/}} as well as a code of principles~\cite{ifcn} that should be followed by these organizations.
%\cite{snopes}, \cite{factcheck}
%approaches are not supposed to replace the expert's verification process---
%of experti
%
%tackle this problem 
%
%The challenge of dealing with fake news and misinformation is not just an artifact of social media
%
% is to hire human editors and expert journalists to review the news content before it gets posted. However, given the this has been the traditional way of fact checking
%%%%
%Late last year, the company announced that it would be working to identify these stories to users, with the help of five independent fact-checking organisations: Snopes, Politifact, FactCheck.org, the Associated Press and ABC News.
%However, hiring a large number of experts to deal with the volume of online news content is extremely expensive and impossible to scale.

\noindent{\bfseries Detection using computational methods.} There has been a recent surge in interest towards developing computational methods for detecting fake news (cf., \cite{conroy2015automatic} for a survey)---we provide a more detailed overview of these methods in the Related Work section. These methods are typically based on building predictive models to classify whether a news is fake or not via using a combination of features related to news content, source reliability, and network structure. One of the major challenges in training such predictive models is the limited availability of corpora and the subjectivity of labelling news as fake~\cite{wang2017liar,rubin2015deception}. Furthermore, it is difficult to design methods based on estimating source reliability and network structure as the number of users who act as sources is diverse and gigantic (e.g., over one billion users on Facebook); and the sources of fake news could be normal users who unintentionally share a news story without realizing that the news is fake. A surge of interest in the problem and in overcoming these technical challenges has led to the establishment of a volunteering based association---FakeNewsChallenge\footnote{\texttt{http://www.fakenewschallenge.org/}}---comprising over 100 volunteers and 70 teams which organizes machine learning competitions related to the problem of detecting fake news. 

%; there is also a large body of related work on rumor detection and information credibility evaluation that are applicable to the problem of detecting fake news
%%%%%%%%
%are also difficult to design given the diverse and large user pool in online social networks who could be potential sources; a lot of times, fake news  
%
%are limited corpora and abmibutiy 
%
%methods
%
%
%challenges as corpora and problem not well defined \cite{rubin2015deception}. Recent challenge . Sources are changing and large. In short, these techniques have generally been unsuccessfuly to stop the recent issues, and large false positives requires experts validation. Another approach is to use algorithmic solutions based on machine learning to detect fake news. However, so far, they have turned out to be inadequate in fixing this problem.
%%\cite{wang2017liar}: "Liar, Liar Pants on Fire": {A} New Benchmark Dataset for Fake News Detection

%%%%%%%%%%%%%%%%%%%%%%%%%%%%%%%%%%%%%%%%%%%%%%%%%%%%%%%%%%%%%%%%%%%%%%%
%%%%%%%%%%%%%%%%%%%%%%%%%%%%%%%%%%%%%%%%%%%%%%%%%%%%%%%%%%%%%%%%%%%%%%%
%%%%%%%%%%%%%%%%%%%%%%%%%%%%%%%%%%%%%%%%%%%%%%%%%%%%%%%%%%%%%%%%%%%%%%%
\subsection{Leveraging users' flags.}
%\noindent{\bfseries Leveraging users' flags.}
%Hybrid human-AI methods 
Given the limitation of the current state-of-the-art computational methods, an alternate approach is to develop hybrid human-AI methods via engaging users of online social networks by enabling them to report fake news. In fact, Facebook has recently taken steps towards this end by launching a fake news reporting tool in Germany \cite{facebook-germany}, as shown in Figure~\ref{fig.fakenews-tool}. The idea of this tool is that as news propagates through the network, users can flag the news as fake. 
\begin{figure}[!h]
	\centering
    \includegraphics[width=0.48\textwidth]{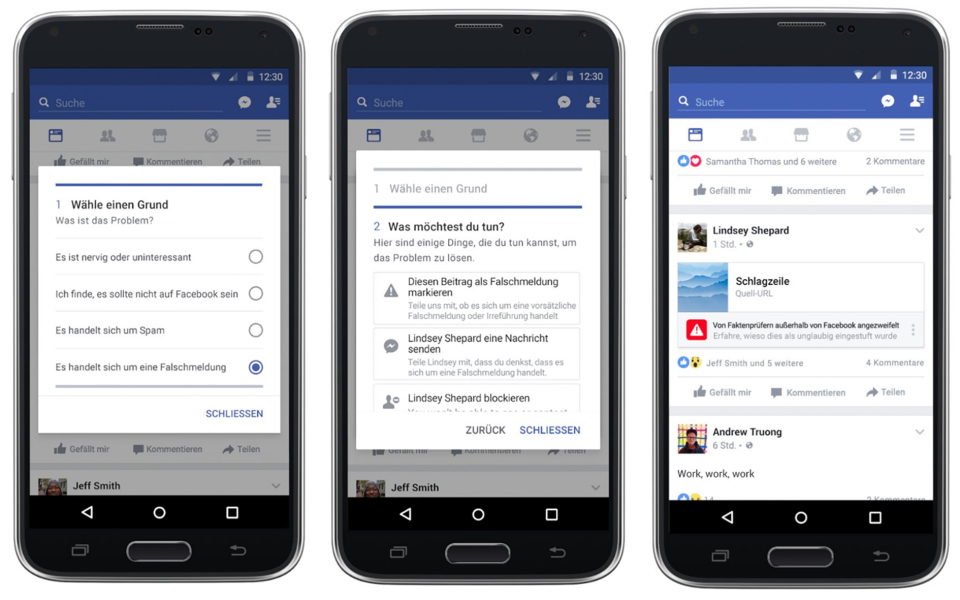}
	\caption{Facebook has launched tools in Germany to report fake news. Image source: \cite{facebook-germany}.}
    \label{fig.fakenews-tool}
\end{figure}

\noindent As proposed by~\citet{facebook-germany}, the aggregated users' flags as well as well as other available signals can be used to identify a set of news which potentially is fake. These news can then be sent to an expert for review via a third-party fact-checking organization. If an expert labels the news as fake, it could be removed from the network or marked as disputed making it appear lower in news-feed ranking.
The contemporary work by \citet{curb18wsdm} explored the idea of detecting fake news via leveraging users' flagging activity by using the framework of marked temporal point processes. We highlight the key differences of their approach to ours in the next section.

\subsection{Our Contributions} 

%\noindent{\bfseries Our contributions.} 
In this paper, we develop algorithmic tools to effectively utilize the power of the crowd (flagging activity of users) to detect fake news. 
%Below, we provide an overview of our approach and summarize the main contributions.
Given a set of news, our goal is to select a small subset of $k$ news, send them to an expert for review, and then block the news which are labeled as fake by the expert. We formalize our objective as to minimize \emph{the spread of misinformation}, i.e., how many users end up seeing a fake news before it is blocked. 
%Thus, in order to achieve a high objective, it is critical to not only select news which could be fake with high-confidence, but also to select them as early as possible. 
%
%We argue, and demonstrate via our experiments, that in order to leverage the users' flags, it is critical to learn about users' accuracy---any algorithm that aggregates users flags via treating all users same is prone to spam and adversarial attacks.
We design our algorithm \algo, which implements a Bayesian approach for learning about users' accuracies over time as well as for performing inference to find which news are fake with high confidence. 
% In general, designing an optimal policy to decide which $k$ news should be selected at every epoch $t$ is challenging, especially because policies' current actions (the news it selects at time $t$) influence the objective value, available actions, and knowledge about users in the future.
%In order to deal with this challenge, we divide the global optimization problem in to simpler local optimization problems per epoch by adding an extra exploration term that encourages selecting news that maximize the value of information towards learning about users' accuracy.
%We then design our algorithm, \algo, which executes per epoch $t \in [T]$ separately to select the $k$ news. By exploiting the submodular property of the local optimization problem at any given epoch $t$, algorithm \algo selects news which are near-optimal w.r.t. the optimal set at $t$.
%
In short, our main contributions include:
\begin{itemize}
  \item We formalize the problem of leveraging users' flagging activity for detection of fake news. We showcase the need to learn about users' accuracy in order to effectively leverage their flags in a robust way.
  \item We develop a tractable Bayesian algorithm, \algo, that actively trades off between exploitation (selecting news that directly maximize the objective value) and exploration (selecting news that helps towards learning about users' flagging accuracy).
  \item We perform extensive experiments using a publicly available Facebook dataset to demonstrate the effectiveness of our approach. We plan to make the code publicly available so that other researchers can build upon our techniques for this important and timely problem of detecting fake news.
\end{itemize}

\section{Related Work}

{\bfseries Contemporary results.}
\citet{curb18wsdm} explored the idea of detecting fake news via leveraging users' flagging activity. In particular, they introduce a flexible representation of the above problem using the framework of marked temporal point processes. They develop an algorithm, \textsc{Curb}\xspace, to select which news to send for fact-checking via solving a novel stochastic optimal control problem.  The key technical differences of the approach by \citet{curb18wsdm} to ours are:
(1) we learn about the flagging accuracy of individual users in an online setting; in contrast, they consider all users to be equally reliable and estimate the flagging accuracy of the population of users from historical data;
(2) our algorithms are agnostic to the actual propagation dynamics of news in the network; they model the actual propagation dynamics as a continuous-time dynamical system with jumps and arrive at an algorithm by casting the problem as an optimal control problem; and
(3) we use discrete epochs with a fixed budget per epoch (\emph{i.e.}, the number of news that can be sent to an expert for reviewing); they use continuous time and consider an overall budget for their algorithm.

\noindent {\bfseries Computational methods for detecting fake news.}
There is a large body of related work on rumor detection and information credibility evaluation (with a more recent focus on fake news detection) that are applicable to the problem of detecting fake news. These methods are typically based on building predictive models to classify whether a news is fake. At a high-level level, we can categorize these methods as follows: (i) based on features using news content via natural language processing techniques \cite{zhao2015enquiring,wei2017learning,volkova2017separating,gupta2014tweetcred}; (ii) via learning models of source reliability and trustworthiness  \cite{li2015discovery,tabibian2017distilling,lumezanu2012bias}; (iii) by analyzing the network structure over which a news propagated \cite{ciampaglia2015computational}; and (iv) based on a combination of the above-mentioned features, i.e., linguistic, source, and network structure \cite{kwon2017rumor,castillo2011information,wu2016information,kumar2016disinformation}. As we pointed out in the Introduction, there are several key challenges in building accurate predictive models for identifying fake news including limited availability of corpora, subjectivity in ground truth labels, and huge variability in the sources who generate fake news (often constituting users who do it unintentionally). In short, these methods alone have so far proven to be unsuccessful in tackling the challenge of detecting fake news.

\noindent{\bfseries Leveraging crowd signals for web applications.}
Crowdsourcing has been used in both industrial applications and for research studies in the context of different applications related to web security. For instance, \cite{moore2008evaluating} and \cite{chia2011re} have evaluated the potential of leveraging the wisdom of crowds for assessing phishing websites and web security. Their studies show a high variability among users---(i) the participation rates of users follows a power-law distribution, and (ii) the accuracy of users' reports vary, and users with more experience tend to have higher accuracy. The authors also discuss the potential of voting fraud when using users' reports for security related applications. \citet{wang2013social} performed a crowdsourcing study on Amazon's Mechanical Turk for the task of sybil detection in online social networks. Their studies show that there is a huge variability among crowd users in terms of their reporting accuracies that needs to be taken into account for building a practical system. \citet{zheleva2008trusting,chen2015trusms} present a system similar to that of ours for the task of filtering email spam and SMS spam, respectively. The authors discuss a users' reputation system whereby reliable users (based on history) can be weighted more when aggregating the reports. However, their work assumes that users' reputation/reliability is known to the system, whereas the focus of our paper is on learning users' reputation over time. 
%In a recent work, \cite{sethi2017crowdsourcing} presents an overview of a system architecture to detect fake news via engaging users and leveraging community systems, however there are no formal objective or algorithmic techniques to be used in such a system.
\citet{freeman2017can} discusses the limitations of leveraging user feedback for fake account detection in online social networks---via data-driven studies using Linkedin data, the authors show that there is only a small number of skilled users (who have good accuracy that persists over time) for detecting fake accounts.
%However, authors point out that by providing appropriate incentives or targeting users' to participate actively could could increase the number of skilled users.

\noindent{\bfseries Crowdsourcing with expert validation}
On a technical side, our approach can be seen as that of a semi-supervised crowdsourcing technique where users' answers can be validated via an external expert. \citet{hung2015minimizing,liu2017improving} present probabilistic models to select specific news instances to be labeled by experts that would maximize the reduction in uncertainty about users' accuracy.
%In our algorithm, \algo, the exploration component of the objective aims to select news with similar objective, however our overall objective is much different from that studied in the above-mentioned papers.
With a similar flavor to ours, \citet{zhao2012bayesian} presents a Bayesian approach to aggregate information from multiple users, and then jointly infer users' reliability as well as ground truth labels. Similar to our approach, they model users' accuracy via two separate parameters for false positive and false negative rates. However, their approach is studied in an unsupervised setting where no expert validation (ground truth labels) are available.

\renewcommand{\algorithmcfname}{Protocol}
%\begin{algorithm2e}[t!]
%\noindent\begin{minipage}{\textwidth}
%\renewcommand\footnoterule{}                  %% This line should come here.
\begin{algorithm*}[t!]
\nl {\bf Input}: {social network graph $G = (U, E)$; labeling budget per epoch $k$}.\\
\nl {\bf Initialize}: {active news $A^0 = \{\}$ (i.e., news for which expert's label is not acquired yet)}.\\
%;\\\qquad \qquad \ user history $D^t(u) = \{\} \ \forall u \in U$}.\\
	\nl \ForEach {$t = 1, 2, \ldots, \Time$}{
	    \tcc*[h]{\textcolor{blue}{At the beginning of epoch $t$}}\\
		\nl News $X^t$ are generated with $o_x \in U$ as the origin/source of $x \in X^t$.\\ \label{interaction.generation}
		\nl Update the set of active news as $A^t = A^{t-1} \cup X^t$. $\forall x \in X^t$, do the following: \label{interaction.active} \\  		
			\Indp			
			\nl Initialize users exposed to the news $x$ as $\pi^{t-1}(x) = \{\}$. \\
%\{o_x\}$ where $o_x \in U$ is the origin/source of $x$ \label{interaction.source}.			
			\nl Initialize users who flagged the news $x$ as $\psi^{t-1}(x) = \{\}$.\\
			%\nl Add $x$ to the set of active items as  $A^t = A^{t-1} \cup \{x\}$. \label{interaction.activeset}\\
			\Indm
	    \tcc*[h]{\textcolor{blue}{During the epoch $t$}} \\
		\nl News $A^t$ continue to propagate in the network. $\forall a \in A^t$, do the following: \\
			\Indp
			\nl News $a$ propagates to more users $u^t(a) \subseteq U \setminus \pi^{t-1}(a)$; i.e., $\pi^t(a) = \pi^{t-1}(a) \cup u^t(a)$. \label{interaction.spread}\\
			\nl News $a$ is flagged as fake by users $l^t(a) \subseteq u^t(a)$; i.e., $\psi^t(a) = \psi^{t-1}(a) \cup l^t(a)$. \label{interaction.flag}\\			
			\Indm
	    \tcc*[h]{\textcolor{blue}{At the end of epoch $t$}} \\
		\nl Algorithm \generalalgo selects a subset  $S^t \subseteq A^t$ of up to size $k$ to get expert's labels given by $y^*(s) \in \{f, \bar{f}\} \ \forall \ s \in S^t$.  \label{interaction.select}\\
			\Indp
			\nl Block the fake news, i.e., $\forall s \in S^t \text{ s.t. } y^*(s) = f$, remove $s$ from the network. \label{interaction.select_block}\\
			\nl Update the set of active news as $A^t = A^t \setminus S^t$ \\			       
			{{\footnotesize{Note that news $s \in S^t \text{ s.t. } y^*(s) = \bar{f}$ remain in the network, continue to propagate, and being flagged by users}}} \label{interaction.select_updateactive}\\
%			\nl Update user history, i.e., $\forall s \in S^t, u \in \pi^t(s) \setminus \{o(s)\}$, do the following:\\
			\Indm
%			\nl \generalalgo updates users' history, i.e., $\forall s \in S^t, u \in \pi^t(s) \setminus \{o(s)\}$, do the following: \label{interaction.update_history}\\
%			\Indp
%			\If {$(u \in \psi^t(s))$}{
%				\nl $D^t(u) = D^{t-1}(u) \cup \{(s, Y(s), f)\}$ 
%			}
%			\Else {
%				\nl $D^t(u) = D^{t-1}(u) \cup \{(s, Y(s), \bar{f})\}$
%			}
%			\Indm
  }			
  \caption{High-level specification of our model}
  \label{interaction}
\end{algorithm*}
%\end{minipage}
%%%%%%%%%%%%%%%%%%%%%%%%%%%%%%%%%%%%%%%%%%%%%%%%%%%%%%%%%%%%%%%%%%%%%%%%%%%%%%
%%%%%%%%%%%%%%%%%%%%%%%%%%%%%%%%%%%%%%%%%%%%%%%%%%%%%%%%%%%%%%%%%%%%%%%%%%%%%%

\section{The Model}
We provide a high-level specification of our model in Protocol~\ref{interaction}. There is an underlying social network denoted as $G=(U,E)$ where $U$ is the set of users in the network. We divide the execution into different epochs denoted as $t = 1, 2, \ldots, T$, where each epoch could denote a time window, for instance, one day. Below, we provide details of our model---the process of news generation and spread, users' activity of flagging the news, and selecting news to get expert's labels.

%\noindent
%{\bfseries Generation and spread of news.}
\subsection{News Generation and Spread}
We assume that new news, denoted by the set $X^t$, are generated at the beginning of every epoch $t$ (cf., line~\ref{interaction.generation}).\footnote{For simplicity of presentation, we consider every news generated in the network to be unique. In real-world settings, the same news might be posted by multiple users because of externalities, and it is easy to extend our model to consider this scenario.}
In this paper, we consider a setting where each news has an underlying label (unknown to the algorithm) of being ``fake" ($f$) or ``not fake" ($\bar{f}$). We use random variable $Y^*(x)$ to denote this unknown label for a news $x$ and its realization is given by $y^*(x) \in \{f, \bar{f}\}$. The label $y^*(x)$ can only be acquired if news $x$ is sent to an expert for review who would then provide the true label. We maintain a set of ``active" news $A^t$  (cf., line~\ref{interaction.active}) which consists of all news that have been generated by the end of epoch $t$ but for which expert's label have not been acquired yet. 
%:\mathcal{X} \rightarrow 

\noindent Each news $x$ is associated with a source user who seeded this news, denoted as $o_x$  (cf., line~\ref{interaction.generation}). We track the spread of news in the set $A^t$ via a function $\pi^t\colon A^t \rightarrow 2^U$. For a news $a \in A^t$, the function $\pi^t(a)$ returns the set of users who have seen the news $a$ by the end of epoch $t$. During epoch $t$, let $u^t(a) \subseteq U \setminus \pi^{t-1}(a)$ be the set of additional users (possibly the empty set) to whom news $a \in A^t$ propagates in epoch $t$, hence $\pi^t(a) = \pi^{t-1}(a) \cup u^t(a)$ (cf., line~\ref{interaction.spread}).

\subsection{Users' Activity of Flagging the News}
In epoch $t$, when a news $a \in A^t$ propagates to a new user $u \in u^t(a)$, this user can flag the news to be fake. We denote the set of users who flag news $a$ as fake in epoch $t$ via a set $l^t(a) \subseteq u^t(a)$ (cf., line~\ref{interaction.flag}). Furthermore, the function $\psi^t(a)$ returns the complete set of users who have flagged the news $a$ as fake by the end of epoch $t$.\footnote{Note that as per specification of Protocol~\ref{interaction}, for any news $x$, the source user $o_x$ doesn't participate in flagging $x$.}
%%%%%%%%%%
For any news $x$ and any user $u \in U$, we denote the label user $u$ would assign to $x$ via a random variable $Y_u(x)$. We denote the realization of $Y_u(x)$ as $y_u(x) \in \{f, \bar{f}\}$ where $y_u(x) = f$ signifies that user has flagged the news as fake. In this paper, we consider a simple, yet realistic, probabilistic model of a user's flagging activity as discussed below.

\noindent {\bfseries User abstaining from flagging activity.}
Reflecting the behavior of real-world users, user $u$ might abstain from actively reviewing the news content (and by default, does not flag the news)---we model this happening with a probability $\gamma_u \in [0, 1]$. Intuitively, we can think of $1-\gamma_u$ as the engagement of user $u$ while participating in this crowdsourcing effort to detect fake news: $\gamma_u = 1$ means that the user is not participating at all.
% and $\gamma_u = 0$ means that the user is reviewing the content of every news and flagging it appropriately.
%\item $\alpha_u \in [0, 1]$ denotes the probability the user $u$ would not participate in flagging activity for news $x$;

\noindent {\bfseries User's accuracy in flagging the news.}
With probability $(1 - \gamma_u)$, user $u$ reviews the content of news $x$ and labels the news.
% as fake (via flagging) or not (as default).
We model the accuracy/noise in the user's labels, conditioned on that the user is reviewing the content, as follows:
\begin{itemize}
\item $\alpha_u \in [0, 1]$ denotes the probability that user $u$ would not flag the news as fake, conditioned on that \emph{news $x$ is not fake and the user is reviewing the content}.
\item $\beta_u \in [0, 1]$ denotes the probability that user $u$ would flag the news as fake, conditioned on that \emph{news $x$ is fake and the user is reviewing the content}. 
\end{itemize}

%\equiv \big(P(Y_u(x) = \bar{f} | h^*(x) = \bar{f})\big)
% \equiv \big(P(Y_u(x) = f | h^*(x) = f)\big)

%
%(i) Given that $h^*(x) = \bar{f}$ (news is not fake), $P(Y_u(x) = \bar{f} | h^*(x) = \bar{f})$ is given by  $h^*(x)$ .. $\alpha_u \in [0, 1]$ denotes the probability the user $u$ would not flag the news ($Y_u(a) = \bar{f}$); and (ii) Given that  $Y^*(a) = f$ (news is fake), $\beta_u \in [0, 1]$ denotes the probability the user $u$ would flag the news ($Y_u(a) = f$). 
%

\noindent {\bfseries User's observed activity.}
Putting this together, we can quantify the observed flagging activity of user $u$ for any news $x$ with the following matrix defined by variables $(\userPrior_{u,\bar{f}}, \userPrior_{u,f})$:
\begin{equation*}
\begingroup
\renewcommand*{\arraystretch}{1.3}
\setlength\arraycolsep{1.5pt}
\begin{bmatrix}
    \userPrior_{u,\bar{f}}  & 1 - \userPrior_{u,f} \\
    1 - \userPrior_{u,\bar{f}}  & \userPrior_{u,f} \\
\end{bmatrix}
=\\\gamma_u
\begin{bmatrix}
    1  & 1 \\
    0  & 0 \\
\end{bmatrix}
+
(1 - \gamma_u)
\begin{bmatrix}
    \alpha_u  & 1 - \beta_u \\
    1- \alpha_u  & \beta_u \\
\end{bmatrix}
\endgroup
\end{equation*}
where
%\begin{align}
%\delta_{u,\bar{f}} &\equiv P\big(Y_u(x) = \bar{f} \ |\ Y^*(x) = \bar{f}\big)  \\
%1 - \delta_{u,\bar{f}} &\equiv P\big(Y_u(x) = f \ |\ Y^*(x) = \bar{f}\big)  \\
%\delta_{u,f} &\equiv P\big(Y_u(x) = f \ |\ Y^*(x) = f\big)  \\
%1 - \delta_{u,f} &\equiv P\big(Y_u(x) = \bar{f} \ |\ Y^*(x) = f\big)
%\end{align}
%
\begin{equation*}
\begin{cases}
\userPrior_{u,\bar{f}} &\equiv P\big(Y_u(x) = \bar{f} \mid Y^*(x) = \bar{f}\big)  \\
1 - \userPrior_{u,\bar{f}} &\equiv P\big(Y_u(x) = f \mid Y^*(x) = \bar{f}\big)  \\
\userPrior_{u,f} &\equiv P\big(Y_u(x) = f \mid Y^*(x) = f\big)  \\
1 - \userPrior_{u,f} &\equiv P\big(Y_u(x) = \bar{f} \mid Y^*(x) = f\big)
\end{cases}
\end{equation*}

%\[
%\begin{bmatrix}
%    P\big(Y_u(a) = \bar{f} | Y^*(a) = \bar{f}\big)   & P\big(Y_u(a) = \bar{f} | Y^*(a) = f\big) \\
%    P\big(Y_u(a) = f | Y^*(a) = \bar{f}\big)   & P\big(Y_u(a) = f | Y^*(a) = f\big) \\
%\end{bmatrix}
%\]
%\[
%=\\\gamma_u
%\begin{bmatrix}
%    1  & 1 \\
%    0  & 0 \\
%\end{bmatrix}
%+
%(1 - \gamma_u)
%\begin{bmatrix}
%    \alpha_u  & 1 - \beta_u \\
%    1- \alpha_u  & \beta_u \\
%\end{bmatrix}
%\]
%Hence, we have:
%\begin{align}
%&P\big(Y_u(a) = \bar{f} | Y^*(a) = \bar{f}\big) = \gamma_u + (1 - \gamma_u)\alpha_u \\
%&P\big(Y_u(a) = f | Y^*(a) = \bar{f}\big)  = (1 - \gamma_u) (1- \alpha_u )\\
%&P\big(Y_u(a) = f | Y^*(a) = f\big) = (1 - \gamma_u) \beta_u \\
%&P\big(Y_u(a) = \bar{f} | Y^*(a) = f\big) = \gamma_u + (1 - \gamma_u)(1 - \beta_u)
%\end{align}
%%More specifically, for each user $u \in U$ and any news $a$, we model the flagging activity via three parameters denoted as $(\alpha_u, \beta_u, \gamma_u)$, with the following interpretation:

\noindent The two parameters $(\alpha_u, \beta_u)$ allow us to model users of different types that one might encounter in real-world settings. For instance, 
\begin{itemize}
  \item a user with $(\alpha_u \geq 0.5, \beta_u \leq 0.5)$ can be seen as a ``news lover"  who generally tends to perceive the news as not fake; on the other hand, a user with $(\alpha_u \leq 0.5, \beta_u \geq 0.5)$ can be seen as a ``news hater" who generally tends to be skeptical and flags the news (i.e., label it as fake).
  %\item for a user with $\alpha_u \geq (1 - \beta_u$), the user's label $y_u(x)$ is better than a random coin flip, irrespective of the true label $y^*(x)$; in contrast, $\alpha_u \leq (1 - \beta_u$) signifies the spam behavior of a user.
  \item a user with $(\alpha_u=1,\beta_u=1)$ can be seen as an ``expert'' who always labels correctly; a user with $(\alpha_u=0,\beta_u=0)$ can be seen as a ``spammer'' who always labels incorrectly.
\end{itemize}

\subsection{Selecting News to Get Expert's Label}
At the end of every epoch $t$, we apply an algorithm \generalalgo---on behalf of the network provider---which selects news $S^t \subseteq A^t$  to send to an expert for reviewing and acquiring the true labels $y^*(s) \ \forall s \in S^t$ (cf., line~\ref{interaction.select}). If a news is labeled as fake by the expert (i.e., $y^*(s) = f$), this news is then blocked from the network (cf., line~\ref{interaction.select_block}). At the end of the epoch, the algorithm updates the set of active news as $A^t = A^t \setminus S^t$ (cf., line~\ref{interaction.select_updateactive}). We will develop our algorithm in the next section; below we introduce the formal objective of minimizing the spread of misinformation via fake news in the network.

\subsection{Objective: Minimizing the Spread of Fake News}
Let's begin by quantifying the utility of blocking a news $a \in A^t$ at epoch $t$---it is important to note that, by design, only the fake news are being blocked in the network. Recall that $|\pi^t(a)|$ denotes the number of users who have seen news $a$ by the end of epoch $t$. We introduce $|\pi^\infty(a)|$ to quantify the number of users who would \emph{eventually} see the news $a$ if we let it spread in the network. Then, if a news $a$ is fake, we define the utility of blocking news $a$ at epoch $t$ as $\val^t(a) = |\pi^\infty(a)| - |\pi^t(a)|$, i.e., the utility corresponds to the number of users saved from being exposed to fake news $a$. If an algorithm $\generalalgo$ selects set $S^t$ in epoch $t$, then the total expected utility of the algorithm for $t = 1, \ldots, T$ is given by
\begin{align}
  \textnormal{Util}(T,\generalalgo) = \sum_{t = 1}^{T} \E\Big[\sum_{s \in S^t} \mathbf{1}_{\{y^*(s) = f\}} \val^t(s)\Big] \label{eq.objective}
\end{align}
where the expectation is over the randomness of the spread of news and the randomness in selecting $S^t \ \forall t \in \{1, \ldots, T\}$. 

\noindent In this work, we will assume that the quantity $\val^t(a)$ in Equation~\ref{eq.objective} can be estimated by the algorithm. For instance, this can be done by fitting parameters of an information cascade model on the spread $\pi^t(a)$ seen so far for news $a$, and then simulating the future spread by using the learnt parameters~\cite{du13scalable,zhao15seismic,rizoiu17expecting}.
%Alternatively, we could substitute the quantity  $\val^t(a)$ by a function that is monotonically decreasing w.r.t.\ $t$. %e.g., $val^t(a) \propto t^{-c}$---we will demonstrate this through our experiments.

%sadikov2011correcting

\noindent Given the utility values $\val^t(\cdot)$, we can consider an oracle \oracle that has access to the true labels $y^*(\cdot)$ for all the news and maximizes the objective in Equation~\ref{eq.objective} by simply selecting $k$ fake news with highest utility.
% However, without any knowledge of $y^*(\cdot)$ any algorithm would simply resort to random selection.
In the next section, we develop our algorithm \algo that performs Bayesian inference to compute $y^*(\cdot)$ using the flagging activity of users as well as via learning users' flagging accuracy $\{ \userPrior_{u,\nofake}, \userPrior_{u,\fake} \}_{u \in U}$ from historic data.

\section{Our Methodology}
\label{sec:algorithms}
In this section we present our methodology and our algorithm \algo. We start by describing how news labels can be inferred for the case in which users' parameters are fixed. Next, we consider the case in which users' parameters are unknown and employ a Bayesian approach for inferring news labels and learning users' parameters. Given a prior distributions on the users' parameters and a history of observed data (users' flagging activities and experts' labels obtained), one common approach is to compute a point estimate for the users' parameters (such as MAP) and use that. However, this can lead to suboptimal solutions because of limited exploration towards learning users' parameters. In \algo, we overcome this issue by employing the idea of \emph{posterior sampling}~\cite{thompson1933likelihood,osband2013more}.
%Finally, we derive \algo by replacing the integration with posterior sampling.

\subsection{Inferring News Labels: Fixed Users' Params}
We take a Bayesian approach to deal with unknown labels $y^*(\cdot)$ for maximizing the objective in Equation~\ref{eq.objective}. As a warm-up, we begin with a simpler setting where we fix the users' labeling parameters $(\userPrior_{u, \nofake}, \userPrior_{u, \fake})$ for all users $u \in U$. Let's consider epoch $t$ and news $a \in A^t$ for which we want to infer the true label $y^*(a)$.
%For inference, we have the following information: users $u \in l^t(a)$ have flagged the news, i.e. $y_u(a) = f$; and users $u \in \pi^t(a) \setminus l^t(a)$ have labeled the news as $y_u(a) = \bar{f}$.
Let $\newsPrior$ be the prior that a news is fake; then, we are interested in computing:
\begin{align*}
&P(Y^*(a)=\fake \mid \{\userPrior_{u,\nofake}, \userPrior_{u,\fake}\}_{u \in U}, \newsPrior, \psi^t(a), \pi^t(a))\notag \\
&\propto \newsPrior \cdot \prod_{u \in \psi^t(a)} P\Big(Y_u(a) = f \mid Y^*(a) = f, \userPrior_{u, \fake} \Big) \cdot \notag \\
&\ \ \ \quad\prod_{u \in \pi^t(a) \setminus \psi^t(a)} P\Big(Y_u(a) = \bar{f} \mid Y^*(a) = f, \userPrior_{u, \fake}\Big) \notag\\
&= \newsPrior \cdot \prod_{u \in \psi^t(a)} \userPrior_{u, f} \cdot \prod_{u \in \pi^t(a) \setminus \psi^t(a)}  (1 - \userPrior_{u, f}) %\label{eq:newsPosteriorFixedParameters}
\end{align*}
%&\qquad\propto P\Big(l^t(a), \pi^t(a) \setminus l^t(a)\ |\ Y^*(a) = f\Big) \\
where the last two steps follow from applying Bayes rule and assuming that users' labels are generated independently. Note that both users' parameters $\{\userPrior_{u, \bar{f}}, \userPrior_{u, f}\}_{u \in U}$ affect the posterior probability of a news being fake as the normalization constant depends on both $P(Y^*(a)=\fake \mid \cdot)$ and $P(Y^*(a)=\nofake \mid \cdot)$.

\noindent At every time $t \in \{1, \ldots, T\}$, we can use the inferred posterior probabilities to greedily select $k$ news $S^t \subseteq A^t, |S^t|=k$ that maximize the total expected utility, i.e.,
\begin{align}
   \sum_{s \in S^t} P(Y^*(s)=\fake \mid \cdot) \cdot \val^t(s). \label{eq.objective.bayes}
\end{align}
This greedy selection can be performed optimally by selecting $k$ news with the highest expected utility.
This is implemented in our algorithm \topx, shown in Algorithm~\ref{alg:topx}.

%To show the role of both of the user's parameters $(\userPrior_{u, \bar{f}}, \userPrior_{u, f})$, below we write the normalization constant for this setting:
%\begin{align}
%\newsPrior \cdot \prod_{u \in l^t(a)} \userPrior_{u, f} \cdot \prod_{u \in \pi^t(a) \setminus l^t(a)}  (1 - \userPrior_{u, f})& \notag\\
%+ (1 - \newsPrior) \cdot \prod_{u \in l^t(a)} (1 - \userPrior_{u, \bar{f}}) \cdot \prod_{u \in \pi^t(a) \setminus l^t(a)}  \userPrior_{u, \bar{f}}& \label{knownparams.normfactor}
%\end{align}
%\begin{align}
%&P\Big(\hat{Y}(a) = f \ |\ l^t(a), \pi^t(a) \setminus l^t(a)\Big)\\
%&=\frac{\prod_{u \in l^t(a)} \delta_{u, f} \cdot \prod_{u \in \pi^t(a) \setminus l^t(a)}  (1 - \delta_{u, f})}{\prod_{u \in l^t(a)} \delta_{u, f} \cdot \prod_{u \in \pi^t(a) \setminus l^t(a)}  (1 - \delta_{u, f}) + \prod_{u \in l^t(a)} \delta_{u, f} \cdot \prod_{u \in \pi^t(a) \setminus l^t(a)}  (1 - \delta_{u, f})}
%\end{align}

\subsection{Inferring News Labels: Learning Users' Params}
In our setting, the users' parameters $\{\userPrior_{u, \nofake}, \userPrior_{u, \fake}\}_{u \in U}$ are unknown and need to be learnt over time. 
%In our approach, we jointly infer news labels and users' unknown parameters.

%{\bfseries History and posterior over users' parameters.}
\noindent {\bf Learning about users.} We  assume a prior distribution over the users' parameters  $(\UserPrior_{\nofake}, \UserPrior_{\fake})$ shared among all users. 
%defined by the hyperparameters $\Theta_{f}$ and $\Theta_{\bar{f}}$ as follows:
%\begin{align*}
%\delta_{u, \bar{f}} \sim P(\delta_{u, \bar{f}} \ | \ \Theta_{\bar{f}}); \delta_{u, f} \sim P(\delta_{u, f} \ | \ \Theta_{f})  
%\end{align*}
%
For each user $u \in U$, we maintain the data history in form of the following matrix:
\begin{align*}
\begingroup
\renewcommand*{\arraystretch}{1.3}
\userHistory^t_u = 
\begin{bmatrix}
    d^t_{u, \nofake\mid\nofake}  & d^t_{u, \nofake\mid\fake}  \\
    d^t_{u,\fake\mid\nofake}  &  d^t_{u, \fake\mid\fake}\\
\end{bmatrix}.
\endgroup
\end{align*}
The entries of this matrix are computed from the news for which experts' labels were acquired. For instance, $d^t_{u, \nofake\mid\nofake}$ represents the count of how often the user $u$ labeled a news as not fake and the acquired expert's label was not fake. 

% Let $x$ be any news for which we have acquired the true label from expert $y^*(x)$. If news $x$ spreads to $u$, then we can record the data $\big(y_u(x), y^*(x)\big)$ for this user. More concisely, we represent the data history up to time $t$ with the following matrix giving counts of different types of data points:
%(but excluding  labels obtained from $S^t$ in epoch $t$) as 

\noindent Given $\userHistory^t_u$, we can compute the posterior distribution over the users' parameters using Bayes rules as follows:
\begin{align*}
P(\userPrior_{u,\nofake} \mid \UserPrior_\nofake, \userHistory_u^t) \propto  P(\userHistory_u^t \mid \userPrior_{u,\nofake}) \cdot P(\userPrior_{u,\nofake} \mid \UserPrior_\nofake) \\ 
= (\userPrior_{u, \bar{f}})^{d^t_{u,\nofake \mid \nofake}} \cdot (1 - \userPrior_{u, \bar{f}})^{d^t_{u, \fake \mid \nofake}} \cdot P(\userPrior_{u, \bar{f}} \mid \UserPrior_{\nofake})
\end{align*}
Similarly, one can compute $P(\userPrior_{u,\fake} \mid \cdot)$.

%We will use $\D^t$ to denote the data history of all the users. 

%{\bfseries Reputation of users.}
%Furthermore, to increase robustness and avoid adversarial users favoring the spread of fake news, we introduce some sort of reputation system for users as is often used by real-world crowdsoucring systems (cf., \cite{vesdapunt2014crowdsourcing} for a discussion of Facebook's crowdsourcing system that uses signals from only users deemed credible based on their history). In our work, we would incorporate signals from only those users for which we have high certainty, i.e., variance over their unknown parameters is lower than  threshold $\tau$. 
%%%\cite{resnick2000reputation}: Reputation systems.
%%%\cite{vesdapunt2014crowdsourcing}: Facebooks crowdsourcing system.
%At time $t$, we measure uncertainty over users parameters as follows:
%\begin{align}
%\max\bigg(H\Big(P(\delta_{u, \bar{f}} \ | \ \Theta_{\bar{f}}, D^t_u)\Big), H\Big(P(\delta_{u, f} \ | \ \Theta_{f}, D^t_u)\Big)\bigg) \label{user.variance}
%\end{align}
%
%where function $H(.)$ quantifies some notion of uncertainty in the posterior distribution---we will use $H(.)$ as variance in this paper. We define a set of credible users $U^t \subseteq U$ to be those users $u \in U$ such that:
%\begin{align*}
%\max\bigg(H\Big(P(\delta_{u, \bar{f}} \ | \ \Theta_{\bar{f}}, D^t_u)\Big), H\Big(P(\delta_{u, f} \ | \ \Theta_{f}, D^t_u)\Big)\bigg) \leq \tau
%\end{align*}

%Recall that $tau$ is the variance threshold to decide which users we used to leverage their signals. 
\noindent {\bfseries Inferring labels.}
We can now use the users' parameters posteriors distributions to infer the labels, for instance, by first computing the MAP parameters
\begin{equation*}
  \userPrior^{\textnormal{MAP}}_{u,\nofake} = \arg \max_{\userPrior_{u,\nofake}} P(\userPrior_{u,\nofake} \mid \UserPrior_\nofake, \userHistory_u^t)
\end{equation*} (and $\userPrior^{\textnormal{MAP}}_{u,\fake}$ similarly) and invoking the results from the previous section.\footnote{Note that a fully Bayesian approach for integrating out uncertainty about users' parameters in this case is equivalent to using the mean point estimate of the posterior distribution.}
\noindent Then, at every epoch $t$ we can invoke \topx with a point estimate for the users' parameters to select a set $S^t$ of news. However this approach can perform arbitrarily bad compared to an algorithm that knows the true users' parameters (we refer to this algorithm as \algoStar) as we show in our analysis. 
 The key challenge here is that of actively trading off exploration (selecting news that maximize the value of information towards learning users' parameters) and exploration (selecting news that directly expected utility at a given epoch). This is a fundamental challenge that arises in sequential decision making problems, e.g., in multi-armed bandits~\cite{chapelle2011empirical}, active search~\cite{vanchinathan2015discovering,chen17efficient} and reinforcement learning.

\renewcommand{\algorithmcfname}{Algorithm}
%\begin{algorithm2e}[t!]
%\noindent\begin{minipage}{\textwidth}
%\renewcommand\footnoterule{}                  %% This line should come here.
\begin{algorithm}[!t]
\nl {\bf Input}: 
  \begin{itemize}
    \item {Active news $A^t$; information $\val^t(\cdot)$, $l^t(\cdot), \pi^t(\cdot)$
    \item budget $k$; news prior $\newsPrior$
    \item users' parameters $\{\userPrior_{u,\nofake}, \userPrior_{u,\fake}\}_{u \in U}$}.
  \end{itemize}

%\nl {\bf Initialize}: {$S = \{\}$}.\\
%	\While {$|S| < k$}{
%		\nl $s^* = \argmax_{s \in A^t \setminus S} \lambda \cdot F^t_1(S \cup \{s\}; \tau) + (1 - \lambda) \cdot F^t_2(S \cup \{s\})$\\
%		\nl $S = S \cup \{s^*\}$ \\
%  }
%\nl {\bf Return}: $S$  

\nl Compute $p(a)$ for all $a \in A^t$ as \\$\quad P(Y^*(a)=\fake \mid \{\userPrior_{u,\nofake}, \userPrior_{u,\fake}\}_{u \in U}, \newsPrior, l^t(a), \pi^t(a))$\\[.5mm]

\nl Select\\$\quad S^t = \arg\max_{S \subseteq A^t, |S| \leq k} \sum_{a \in S} p(a) \val^t(a)$\\[.5mm]

\nl {\bf Return}: $S^t$
  \caption{Algorithm \topx}
  \label{alg:topx}
\end{algorithm}
%\end{minipage}
%%%%%%%%%%%%%%%%%%%%%%%%%%%%%%%%%%%%%%%%%%%%%%%%%%%%%%%%%%%%%%%%%%%%%%%%%%%%%%
%%%%%%%%%%%%%%%%%%%%%%%%%%%%%%%%%%%%%%%%%%%%%%%%%%%%%%%%%%%%%%%%%%%%%%%%%%%%%%

\subsection{Our Algorithm \algo}

In this section, we present our algorithm \algo, shown in Algorithm~\ref{algorithm}, that actively trades off between exploration and exploitation by the use of posterior sampling aka Thompson sampling~\cite{thompson1933likelihood,osband2013more}. On every invocation, the algorithm samples the users' parameters from the current users' posterior distributions and invokes \topx with these parameters. Intuitively, we can think of this approach as sampling users' parameters according to the probability they are optimal.
%Although this idea dates back to 1933 there has been a resurge of interest and this approach has recently been applied in many applications involving sequential decision making and has been shown to provide state of the art performance. 

\vspace{2mm}
\noindent {\bf Analysis.} We analyze our algorithms in a simplified variant of Protocol~\ref{interaction}, in particular we make the following simplifications:
\begin{enumerate}
  \item There are $M$ sources $o_1, \ldots, o_M$, each generating news every epoch $t$.
  \item For any news $x$ seeded at epoch $t$, $\val^\tau(x) > 0$ only for $\tau=t$. This means that news $x$ reaches it maximum spread at the next timestep $t+1$, hence the utility of detecting that news drops to $0$.
\end{enumerate}
%
%%%%%%%%%%%%%%%%%%%%%%%%%%%%%%%%%%%%%%%%%%%%%%%%%%%%%%%%%%%%%%%%%%%%%%%%%%%%%%
%%%%%%%%%%%%%%%%%%%%%%%%%%%%%%%%%%%%%%%%%%%%%%%%%%%%%%%%%%%%%%%%%%%%%%%%%%%%%%
\renewcommand{\algorithmcfname}{Algorithm}
%\begin{algorithm2e}[t!]
%\noindent\begin{minipage}{\textwidth}
%\renewcommand\footnoterule{}                  %% This line should come here.
\begin{algorithm}[!t]
\nl {\bf Input}: 
  \begin{itemize}
    %\item {Active news $A^t$; information $\val^t(\cdot)$, $l^t(\cdot), \pi^t(\cdot)$
    %\item budget $k$; news prior $\newsPrior$
    \item user priors $\UserPrior_\fake, \UserPrior_\nofake$; users' histories $\{ \userHistory_u^t \}_{u \in U}$.
  \end{itemize}

%\nl {\bf Initialize}: {$S = \{\}$}.\\
%	\While {$|S| < k$}{
%		\nl $s^* = \argmax_{s \in A^t \setminus S} \lambda \cdot F^t_1(S \cup \{s\}; \tau) + (1 - \lambda) \cdot F^t_2(S \cup \{s\})$\\
%		\nl $S = S \cup \{s^*\}$ \\
%  }
%\nl {\bf Return}: $S$  
\nl Sample\\$\quad\userPrior_{u,\nofake} \sim P(\userPrior_{u,\nofake} \mid \UserPrior_\nofake, \userHistory_u^t)$, $\userPrior_{u,\fake} \sim P(\userPrior_{u,\fake} \mid \UserPrior_\fake, \userHistory_u^t)$\\[.5mm]

\nl $S^t \leftarrow $ Invoke \topx with parameters $\{ \userPrior_{u,\nofake}, \userPrior_{u,\fake} \}_{u \in U}$

%\nl Compute $p(a)$ for all $a \in A^t$ as \\$\quad P(Y^*(a)=\fake \mid \{\userPrior_{u,\fake}, \userPrior_{u,\nofake}\}_{u \in U}, \newsPrior, l^t(a), \pi^t(a))$\\[.5mm]

%\nl Select\\$\quad S^t = \arg\max_{S \subseteq A^t, |S| \leq k} \sum_{a \in S} p(a) \val^t(a)$\\[.5mm]

\nl {\bf Return}: $S^t$
  \caption{Algorithm \algo}
  \label{algorithm}
\end{algorithm}
%\end{minipage}
%%%%%%%%%%%%%%%%%%%%%%%%%%%%%%%%%%%%%%%%%%%%%%%%%%%%%%%%%%%%%%%%%%%%%%%%%%%%%%
%%%%%%%%%%%%%%%%%%%%%%%%%%%%%%%%%%%%%%%%%%%%%%%%%%%%%%%%%%%%%%%%%%%%%%%%%%%%%%
To state our theoretical results, let us introduce the regret of an algorithm \generalalgo as
\begin{equation*}
  \textnormal{Regret}(T,\generalalgo) = \textnormal{Util}(T, \algoStar) - \textnormal{Util}(T, \generalalgo).
\end{equation*}
We can now immediately state our first theoretical result, highlighting the necessity of exploration.
\vspace{3mm}
\begin{proposition}
  Any algorithm \generalalgo using deterministic point estimates for the users' parameters suffers linear regret, i.e.,
  $$\textnormal{Regret}(T, \generalalgo) = \Theta(T).$$
  \label{neg-result}
\end{proposition}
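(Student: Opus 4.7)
The goal is to exhibit a single instance on which every algorithm that maps histories to actions via a deterministic point estimate of the users' parameters incurs constant per‑epoch regret against \algoStar. The matching upper bound $\textnormal{Regret}(T,\generalalgo) = O(T)$ is immediate because per‑epoch utility is bounded (by $\max_a |\pi^\infty(a)|$ times $k$), so only the $\Omega(T)$ lower bound needs work. I would construct a two‑source instance with disjoint user populations in which the deterministic estimator remains forever stuck on the prior for one of the populations, because the algorithm never selects any news whose expert label would generate evidence about that population.

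\textbf{Construction.} Take $M=2$ sources $o_1,o_2$ with disjoint exposure sets $U_1,U_2\subseteq U$, each of size large enough that flag counts concentrate. Let all news from $o_2$ be fake ($y^*=\fake$) and all news from $o_1$ be non‑fake ($y^*=\nofake$), and let $|\pi^\infty|$ be the same positive constant for each news. Set the true parameters so that users in $U_2$ are highly accurate flaggers of fake news (e.g., $\alpha_u=\beta_u$ close to $1$, $\gamma_u=0$), while users in $U_1$ are moderately accurate. Choose the shared prior $(\UserPrior_\nofake,\UserPrior_\fake)$ so that its deterministic point estimate for a user with \emph{no} history paints $U_2$‑type users as near‑spammers (e.g., $\userPrior^{\textnormal{MAP}}_{u,\fake}$ small). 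Under this wrong point estimate, the posterior that an $o_2$‑news is fake, computed as in \topx from the heavy flagging by $U_2$, is strictly smaller than the posterior that an $o_1$‑news is fake, which is inflated by moderate flagging from $U_1$. Since $\val^t$ is the same for both, \topx therefore selects the $o_1$‑news deterministically in every epoch.

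\textbf{Lock‑in argument.} Observe that once the algorithm is deterministic and always queries the expert on an $o_1$‑news, the counts $\userHistory^t_u$ only accumulate entries for users in $U_1$, because $U_1\cap U_2=\emptyset$ and $\userHistory^t_u$ only updates when $u$ is exposed to a news that is actually sent to the expert. In particular, for every $u\in U_2$ and every $t$, $\userHistory^t_u$ equals its initial value, so the point estimate for every $U_2$‑user is frozen at the (wrong) prior value forever. Consequently, the news‑posterior computation for $o_2$‑news in every epoch reproduces exactly the same inequality as in epoch $1$, and the algorithm keeps picking $o_1$‑news. Since $o_1$‑news are not fake, the indicator in Equation~\ref{eq.objective} vanishes and $\textnormal{Util}(T,\generalalgo)=0$, whereas \algoStar picks the $o_2$‑news each epoch, collecting a strictly positive utility $\val^t(s)>0$, yielding $\textnormal{Util}(T,\algoStar)=\Omega(T)$ and hence $\textnormal{Regret}(T,\generalalgo)=\Omega(T)$.

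\textbf{Main obstacle.} The only delicate part is ensuring that the deterministic tie‑breaking really prefers $o_1$ from epoch $1$ onward and that no indirect channel updates the $U_2$ estimates. The first is handled by making the inequality strict by a constant margin through the choice of $|U_1|,|U_2|$ and the prior; the second follows because the history matrix $\userHistory^t_u$ only increments on news that are actually labeled by the expert, and by the lock‑in, no $o_2$‑news is ever labeled. A minor refinement is to state the result as "there exists an instance" so that the construction above suffices; if a worst‑case‑over‑priors statement is intended, one instead fixes the prior first and then picks the true parameters so that the prior's point estimate is maximally misleading, but the rest of the argument is unchanged.
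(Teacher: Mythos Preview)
Your proposal is correct and shares the paper's core idea: exhibit an instance in which a deterministic point-estimate algorithm is locked into a selection that never produces evidence about the user(s) whose flags are actually informative, so the wrong estimate persists forever and per-epoch regret is bounded away from zero. The concrete constructions differ, however. The paper uses a minimal two-user instance: one user has \emph{known} parameters $(0.5+\epsilon,\,0.5+\epsilon)$, while the second user is either an expert $(1,1)$ or a spammer $(0,0)$ under a symmetric prior, so any deterministic point estimate for the second user collapses to the uninformative value $(0.5,0.5)$ and the algorithm permanently bases its choice on the nearly-useless first user. You instead use two sources with \emph{disjoint} exposure populations $U_1,U_2$ and a prior whose point estimate is badly wrong for $U_2$; disjointness then guarantees that no expert label obtained on $o_1$-news ever touches $\userHistory^t_u$ for $u\in U_2$, freezing those estimates. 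Your route makes the lock-in mechanism fully explicit and does not rely on a symmetry in the prior; the cost is a heavier instance and the need to check that the evolving $U_1$ estimates never flip the comparison (which, as you note, can be secured by a margin---most cleanly by taking the $U_1$ users close to indifferent so the $o_1$ posterior stays near $\omega$ regardless of history). The paper's construction is more economical but its sketch leaves the ``never able to learn'' step implicit; your argument fills in exactly that mechanism.
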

\vspace{-5mm}
\begin{proof}[Proof sketch] The proof follows by considering a simple problem involving two users, where we have perfect knowledge about one user with parameters $(0.5 + \epsilon, 0.5 + \epsilon)$ and the other user either has parameters $(1,1)$ or $(0,0)$ (\emph{expert} or \emph{spamer}). The key idea here is that any algorithm using point estimates can be tricked into always making decisions based on the first user's flagging activities and is never able to learn about the perfect second user.
\end{proof}
The above result is a consequence of insufficient exploration which is overcome by our algorithm \algo, as formalized by the following theorem.
\begin{theorem}
  The expected regret of our algorithm \algo is
    $\E[\textnormal{Regret}(T, \algo)] = \mathcal{O}(C\sqrt{M'T\log(CM'T)})$, where $M'=\binom{M}{k}$ and
  $C$ is a problem dependent parameter. $C$ quantifies the total number of realizations of how $M$ news can spread to $U$ users and how these users label the news.
\end{theorem}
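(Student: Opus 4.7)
The plan is to reduce the setting to a finite-action Bayesian bandit and invoke a standard posterior-sampling regret bound in the style of Russo and Van Roy. Under the two simplifications of the theorem, each epoch $t$ is an essentially independent round in which \algo picks a set $S^t$ of $k$ news out of the $M$ freshly generated ones, so the per-round action space has cardinality $M' = \binom{M}{k}$ and the per-round reward $\sum_{s\in S^t} \mathbf{1}_{\{y^*(s)=f\}} \val^t(s)$ is bounded. I would identify the latent environment with the unknown users' parameters $\{\userPrior_{u,\nofake},\userPrior_{u,\fake}\}_{u\in U}$ (together with the news prior $\newsPrior$), whose Bayesian prior is the one maintained by \algo; then $\E[\textnormal{Regret}(T,\algo)]$ equals the Bayesian regret of posterior sampling relative to the oracle \algoStar.

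For each round $t$ I would build a confidence set $\Theta_t$ over the environment with two properties: (a) the true parameters lie in $\Theta_t$ with probability at least $1-1/(CM'T)$, and (b) for any $\theta\in\Theta_t$ and any action $S$, the expected utility of $S$ under $\theta$ differs from its expected utility under the true parameters by at most a width $\varepsilon_t$. Property (a) follows from a union bound combining Hoeffding concentration of the per-user histories $\userHistory_u^t$ across the $C$ joint realizations of (news labels, spread, flagging), the $M'$ candidate actions, and the $T$ rounds, yielding a parameter-space width of order $\sqrt{\log(CM'T)/n_t}$, where $n_t$ is the number of expert-labeled news so far. I would then invoke the Russo--Van Roy Thompson-sampling identity: since the sampled parameters $\tilde\theta_t$ and the true $\theta^*$ have the same conditional distribution given the history, the per-round Bayesian regret is bounded by $2\varepsilon_t$. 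Summing and applying Cauchy--Schwarz together with a pigeonhole argument over how often each of the $M'$ actions is played produces the $\sqrt{M'T}$ scaling, the $\sqrt{\log(CM'T)}$ factor comes from the union bound, and the $C$ prefactor arises when translating the parameter-space width of $\Theta_t$ into a utility-space width.

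The hardest step will be property (b): the Bayes computation inside \topx that maps $\{\userPrior_u\}$ to the per-news posterior $P(Y^*(a)=\fake\mid\cdot)$ is nonlinear, so an $\ell_\infty$ ball around the true parameters does not directly give a utility ball of comparable width. I would handle this via a Lipschitz argument on the log-likelihood $\sum_{u\in\pi^t(a)}\log P(Y_u(a)\mid Y^*(a),\userPrior_u)$, whose Lipschitz constant scales with the maximum number of users exposed to a news and with the inverse of the smallest $\userPrior_{u,\cdot}$ that is bounded away from $0$ and $1$; these factors are precisely what get absorbed into the problem-dependent constant $C$. Any looseness here directly inflates $C$, and a tighter dependence would require either an information-ratio analysis or a second-order expansion around the posterior mean, which is where I expect most of the technical work to lie.
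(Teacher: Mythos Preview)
Your reduction is sound and lands on the same high-level picture as the paper: under the two simplifications each epoch is an independent round, the action set has size $M'=\binom{M}{k}$, and posterior sampling over the unknown users' parameters is exactly Thompson sampling against the oracle \algoStar. The paper, however, takes a much shorter route than you do. It does not build confidence sets, invoke Hoeffding, or carry out any Lipschitz analysis of the Bayes map. Instead it maps the simplified protocol to a finite MDP with horizon~$1$, identifies the action set with the $M'$ subsets of news, identifies the \emph{state space} with the $C$ possible joint realizations of how the $M$ news spread to users and how those users flag them, and then applies the PSRL regret bound of Osband, Russo, and Van~Roy (2013) as a black box. That bound, specialized to horizon~$1$, directly yields $\mathcal{O}(S\sqrt{AT\log(SAT)})$ with $S=C$ and $A=M'$, which is the stated theorem.

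So the two arguments differ in where the work is done. The paper pushes everything into the cited PSRL result; your proposal unpacks a Russo--Van~Roy style analysis from scratch, concentrating the per-user histories $\userHistory_u^t$ and then propagating the resulting parameter uncertainty through the nonlinear posterior computation in \topx via a Lipschitz bound. Your route is heavier but more informative: it would make explicit how $C$ depends on the maximum exposure $|\pi^t(a)|$ and on how close the $\userPrior_{u,\cdot}$ are allowed to get to $0$ or $1$, whereas in the paper $C$ is simply the cardinality of a discretized observation space. One caution: the $C$ you arrive at (a Lipschitz-times-union-bound constant) and the paper's $C$ (a state count) are not the same quantity, so if you carry your argument through you should state the bound with your own constant rather than claiming it coincides with the paper's. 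The step you flag as hardest---controlling the utility gap via the log-likelihood Lipschitz constant---is real and is exactly the piece the paper avoids by going through the MDP abstraction.
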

\begin{proof}[Proof sketch] The proof of this theorem follows via interpreting the simplified setting as a reinforcement learning problem. Then, we can apply the generic results for reinforcement learning via posterior sampling of~\citet{osband2013more}. In particular, we map our problem to an MDP with horizon $1$ as follows. The actions in the MDP correspond to selecting $k$ news from the $M$ sources, the reward for selecting a set of news $S$ is given by Equation~\ref{eq.objective.bayes} (evaluated using the true users' parameters).
\end{proof}
\noindent Given that the regret only grows as $\mathcal{O}(\sqrt{T})$ (i.e., sublinear in $T$), this theorem implies that \algo converges to \opt as $T \rightarrow \infty$. However, as a conservative bound on $C$ could be exponential in $|U|$ and $M$, convergence may be slow. Nevertheless, in practice we observe competitive performance of \algo compared to \opt as indicated in our experiments. Hence, \algo overcomes the issues in Proposition~\ref{neg-result}, and actively trades off exploration and exploitation.

\begin{figure*}[t]
  \begin{subfigure}[b]{0.32\textwidth}
    \includegraphics[width=5.5cm]{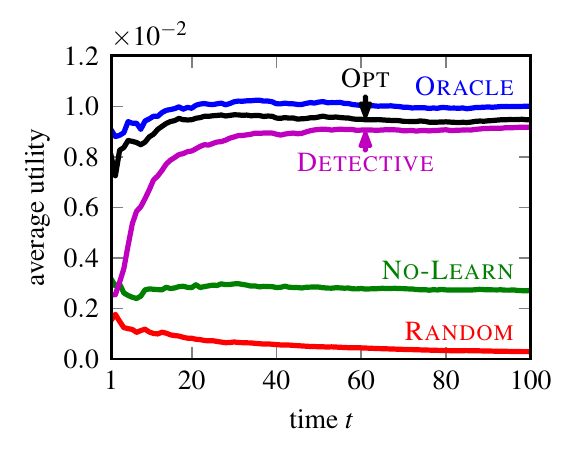}
    \subcaption{Learning about users}
    \label{fig:utility}
  \end{subfigure}%\hspace{3cm}
  \begin{subfigure}[b]{0.32\textwidth}
    \includegraphics[width=5.5cm]{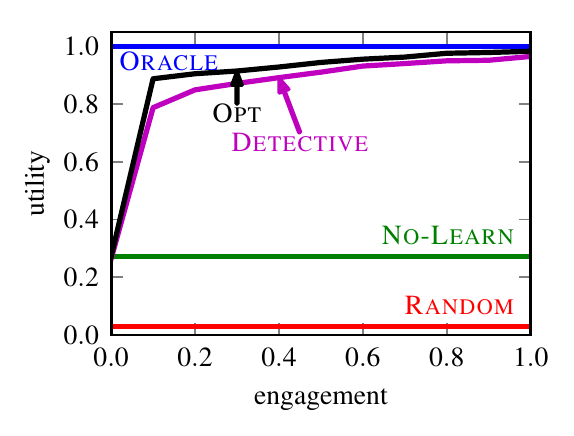}
    \subcaption{Users' engagement in flagging.}
    \label{fig:laziness}
  \end{subfigure} 
  \begin{subfigure}[b]{0.32\textwidth}
    \includegraphics[width=5.5cm]{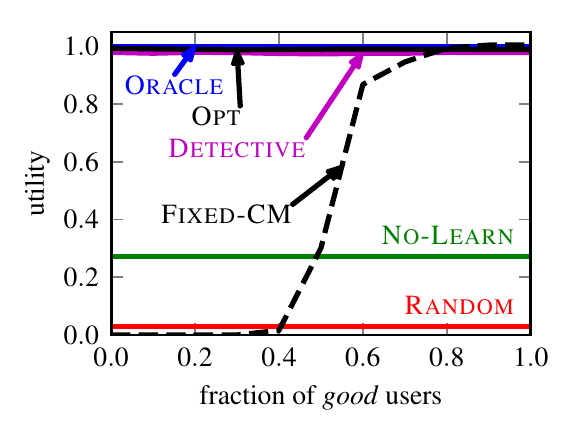}
    \subcaption{Robustness against spammers}
    \label{fig:user-type}
  \end{subfigure}
  \caption{Experimental results.
     \emph{(a)} Learning about users: \algo achieves average utility competitive compared to that of \oracle (which knows the true news labels). The average utility of \algo converges to that of \algoStar as \algo progressively learns the users' parameters.
     \emph{(b)} Users' engagement in flagging: even with low engagement \algo can effectively leverage crowd signals to detect fake news.
     \emph{(c)} Robustness against spammers: \algo is effective even if the majority of users is adversarial, highlighting the importance of learning about users' flagging accuracy for robustly leveraging crowd signals.
     }
  \label{fig:experiments}
\end{figure*}

%%%%%%%%%%%%%%%%%%%%%%%%%%%%%%%%%%%%%%%%%%%%%%%%%%%%%%%%%%%%%
%%%%%%%%%%%%%%%%%%%%%%%%%%%%%%%%%%%%%%%%%%%%%%%%%%%%%%%%%%%%%
\section{Experimental Setup}

%In this section we describe our experimental setup.
% the efficiency of our proposed algorithm for fake news detection in a social network.

\noindent \textbf{Social network graph and news generation.} We consider the \emph{social circles Facebook} graph~\cite{leskovec2012learning}, consisting of 4{,}039 users (nodes) $U$ and 88{,}234 edges, computed from survey data collected by using a Facebook app for identifying social circles. Every user can be the seed of news as described shortly and to every user a probability is assigned with which it (hypothetically) generates fake news in case it seeds news. In particular, $20\%$ of the users generate fake news with probability $0.6$, $40\%$ of the users generate fake news with probability $0.2$ and the remaining $40\%$ of the users generate fake news with probability $0.01$ (the class of a user is assigned randomly). For determining the seeds of news, we partition the users into users $U_n$ which commonly spread news and users $U_r = U \setminus U_n$ which only occasionally spread news. That is, in every iteration of Protocol~\ref{interaction}, we select $M=25$ users for generating news, where users in $U_n$ are selected with probability $\tfrac{0.5}{|U_n|}$ and users in $U_r$ are selected with probability $\tfrac{0.5}{|U_r|}$. Hence, in our experimental setup this corresponds to a prior for seeding fake news of about $20\%$, i.e., $\newsPrior \approx 0.2$.% of all generated news are fake news.

\noindent \textbf{News spreading.} In our experiments, news spread according to an independent cascade model~\cite{kempe2003maximizing}, i.e., the diffusion process of every news is a separate independent cascade with infection probability $0.1 + \mathcal{U}[0, 0.1]$ (fixed when the news is seeded). In every epoch of Protocol~\ref{interaction}, we perform two iterations of the independent cascade models to determine the news spread at the next epoch.
We estimate the number of users who would eventually see news $a$, i.e., $|\pi^\infty(a)|$, by executing the independent cascade models for each news for 600 iterations.

% We randomly select a subset of $U_S \subseteq U$ of 100 users which are  frequent sources of news, i.e., . The other $50 \%$ of the news are generated uniformly at random from the remaining users $U \setminus U_S$. Each user is randomly assigned a probability for generating fake news from $[0.6, 0.01, 0.2, 0.01, 0.2]$, resulting in a prior for seeding fake news of $20\%$.

\noindent\textbf{Users' parameters.} In our experiments we consider three types of users, i.e., \emph{good users} ($\alpha_u = \beta_u = 0.9$), \emph{spammers} ($\alpha_u = \beta_u = 0.1$) and \emph{indifferent users} ($\alpha_u = \beta_u = 0.5$). Unless specified otherwise, each user is randomly assigned to one of these three types.
%The types differ by their parameters $\alpha_u$ and $\beta_u$ (cf., the section on Users' Activity of Flagging the News). 
Also, we set $\gamma_u=0$ unless specified otherwise (note that $1-\gamma_u$ quantifies the engagement of a user).

%\noindent 
%All selected news stop spreading in the network and the stopped spread for selected fake news is credited to the corresponding algorithm.

\noindent \textbf{Algorithms.} We execute Protocol~\ref{interaction} for $T=100$ epochs.
In every epoch of Protocol~\ref{interaction}, the evaluated algorithms select $k=5$ news to be reviewed by an expert.
In our experiments we compare the performance of \algo, \algoStar (unrealistic: \topx invoked with the true users' parameters), \oracle (unrealistic: knows the true news labels). In addition, we consider the following baselines:
\begin{itemize}
  %\item \algo. Our proposed algorithm as described in Section Our Algorithm \algo.
  %\item \algoStar. A fictitious variant of \algo, in which the algorithm is provided the true confusion matrices of the users. The performance of this algorithm is an upper bound for the performance of our algorithm \algo.
  \item \fixedCM. This algorithm leverages users' flags without learning about or distinguishing between users. It uses fixed users parameters $\userPrior_{u,\nofake}=\userPrior_{u,\fake}=0.6$ for invoking \topx.
  %\item \opt. This algorithm selects the $k$ fake news with highest utility, i.e., $S_t = \arg\max_{S \subseteq \{x \mid y^*(x) = f, x \in A^t\}, |S|=k} \sum_{x' \in S} \val^t(x)$.
  \item \nolearn. This algorithm does not learn about users and does not consider any user flags. It greedily selects those news with highest $\val^t(\cdot)$, i.e., $$S^t = \arg\max_{S \subseteq A^t, |S|=k} \sum_{s \in S} \val^t(s).$$
  \item \random. This algorithm selects a random set $S^t \subseteq A^t, |S^t|=k$ of active news for labeling by experts.
\end{itemize}

\section{Experimental Results}

In this section we demonstrate the efficiency of our proposed algorithm for fake news detection in a social network.
All reported utilities are normalized by $\textnormal{Util}(T,\oracle)$ and all results are averaged over $5$ runs.

\noindent \textbf{Learning about users and and exploiting user's flags.} In this experiment we compare the average utility, i.e., $\tfrac{1}{t} \textnormal{Util}(t,\generalalgo)$ (cf., Equation~\ref{eq.objective}), achieved by the different algorithms at epoch $t$ for $t=1, \ldots, T$.
The results are shown in Figure~\ref{fig:utility}. We observe that \algo and \algoStar achieve performance close to that of \oracle. This is impressive, as these algorithms can only use the users' flags and the users' parameters $\{\userPrior_{u,\nofake}, \userPrior_{u,\fake} \}_{u \in U}$ (or their beliefs about the users' parameters in case of \algo) to make their predictions. We also observe that the performance of \algo converges to that of \algoStar as \algo progressively learns the users' parameters. 
%This illustrates that our algorithm \algo effectively learns about users and can also effectively exploit that information.
The algorithms \nolearn and \random achieve clearly inferior performance compare to \algo.

\noindent \textbf{Users' engagement in flagging.} In this experiment, we vary the engagement $1-\gamma_u$ of the users. We report the utilities $\textnormal{Util}(T,\generalalgo)$ in Figure~\ref{fig:laziness}. We observe that with increasing engagement the performance of \algo and \algoStar improves while the performance of the other shown algorithms is not affected by  the increased engagement.
%This is in line with our theoretical understanding of \algo.
Importantly, note that also with a low engagement \algo can effectively leverage crowd signals to detect fake news. 

\noindent \textbf{Robustness against spammers.} In this experiment we consider only two types of users, i.e., good users and spammers. We vary the fraction of good users relative to the total number of users. We report the utilities $\textnormal{Util}(T,\generalalgo)$ achieved by the different algorithms in Figure~\ref{fig:user-type}. We also plot the additional baseline \fixedCM. Observe that the performance of \fixedCM degrades with a decreasing fraction of good users. 
%In contrast, the algorithm \fixedCM requires $40\%$ good users to achieve performance comparable to that of random.
\algo (thanks to learning about users) is effective even if the majority of users is adversarial.
%Its performance increases monotonically with the fraction of good users and saturates slightly below the performance of \algo/\algoStar/\opt. Not that \fixedCM does not converge to the performance of \algoStar as the fixed confusion matrices assumed in \fixedCM does not correspond to the actual confusion matrices of the users.
This highlights the fact that it is crucial to learn about users' flagging accuracy in order to robustly leverage crowd signals.

\section{Conclusions}

In our paper we considered the important problem of leveraging crowd signals for detecting fake news. We demonstrated that any approach that is not learning about users' flagging behaviour is prone to failure in the presence of adversarial/spam users (who want to ``promote'' fake news). We proposed the algorithm \algo that performs Bayesian inference for detecting fake news and jointly learns about users over time. Our experiments demonstrate that \algo is competitive with the fictitious algorithm \opt, which knows the true users' flagging behaviour. Importantly, \algo (thanks to learning about users) is robust in leveraging flags even if a majority of the users is adversarial.
There are some natural extensions for future work. For instance, it would be useful to extend our approach to model and infer the trustworthiness of sources. It would also be important to conduct user studies by deploying our algorithm in a real-world social system.

% based on history of users' flagging activity and expert's labels obtained.
% We demonstrate the effectiveness of our approach via extensive experiments and show the power of leveraging community signals.
% towards leveraging crowd signals for detecting fake news.
%on a publicly available Facebook dataset.

%
%We introduced a novel class of utility functions over sequences of items, thereby extending the expressive power of commonly used submodular set functions. We showed that the naive extensions of classical algorithms fail for our problem of selecting sequences of items of bounded length that maximize the utility. In fact, the search space for the optimal solutions for this new class of functions is exponentially larger than that of the classical subset selection problems. We developed a novel algorithm for sequence selection which takes into account the structural properties of the graph underlying the sequential preferences. Our theoretical analysis provides approximation guarantees for our algorithm w.r.t.\ the intractable optimal solution. We performed experiments on a movie recommendation dataset for the task of recommending sequences of movies to a user. We demonstrate that  several existing baselines can be cast as a special instances of our model and show the effectiveness of our approach in terms of improved accuracy over these baselines.

%%%%%%%%%%%%%%%%%%%%%%%%%%%%%%%%%%%%%%%%%%%%%%%%%%%%%%%%%%%
%%%%%%%%%%%%%%%%%%%%%%%%%%%%%%%%%%%%%%%%%%%%%%%%%%%%%%%%%%% ACKS
\begin{acks}
  This work was supported in part by the Swiss National Science Foundation, and Nano-Tera.ch program as part of the Opensense II project, ERC StG 307036, and a Microsoft Research Faculty Fellowship. Adish Singla acknowledges support by a Facebook Graduate Fellowship.
\end{acks}

%%%%%%%%%%%%%%%%%%%%%%%%%%%%%%%%%%%%%%%%%%%%%%%%%%%%%%%%%%%
%%%%%%%%%%%%%%%%%%%%%%%%%%%%%%%%%%%%%%%%%%%%%%%%%%%%%%%%%%%

\bibliographystyle{ACM-Reference-Format}
%\bibliography{refs}

%%% -*-BibTeX-*-
%%% Do NOT edit. File created by BibTeX with style
%%% ACM-Reference-Format-Journals [18-Jan-2012].

\end{document}